\documentclass[onecolumn,11pt]{article}
\usepackage[top=2.54cm, bottom=2.54cm, left=2.54cm, right=2.54cm,a4paper]{geometry}

\usepackage[utf8]{inputenc} 

\usepackage{amsmath,amsthm}
\usepackage{bm}  
\usepackage{bbm}

\usepackage{tikz,pgfplots}
\pgfplotsset{compat=1.14}

\usepackage[backend=bibtex,sorting=none,style=phys,biblabel=brackets,backref=false,bibencoding=utf8,maxnames=20,eprint=true]{biblatex}
\makeatletter
\pretocmd{\blx@head@bibintoc}{\phantomsection}{}{\ddt}
\makeatother
\DefineBibliographyStrings{english}{%
  backrefpage = {page},
  backrefpages = {pages},
  phdthesis = {PhD dissertation}
}
\addbibresource{bibliography.bib}

\setlength{\biblabelsep}{3pt}
\setlength{\bibhang}{0pt}
\setlength{\bibitemsep}{2pt}

\usepackage{caption}
\captionsetup{width=.925\textwidth}

\usepackage{titlesec}

\titleformat*{\section}{\large \bfseries}
\titleformat*{\subsection}{\normalsize\bfseries}
\titleformat*{\subsubsection}{\bfseries}
\titleformat*{\paragraph}{\large\bfseries}
\titleformat*{\subparagraph}{\large\bfseries}

\titlespacing*\section{0pt}{12pt plus 4pt minus 2pt}{2pt plus 2pt minus 2pt}

\usepackage{xcolor}
\definecolor{dullmagenta}{rgb}{0.4,0,0.4}   
\definecolor{darkblue}{rgb}{0,0,0.4}
\usepackage[bookmarks,colorlinks,breaklinks]{hyperref}  
\hypersetup{linkcolor=red,citecolor=blue,filecolor=dullmagenta,urlcolor=darkblue} 

\usepackage[bitstream-charter,greekuppercase=italicized,cal=cmcal]{mathdesign}
\usepackage[T1]{fontenc}

\usepackage{amsbsy,verbatim,graphicx}

\usepackage{braket}
\newcommand{\ketbra}[1]{|#1\rangle\langle #1|}

\renewcommand{\epsilon}{\varepsilon}
\renewcommand{\phi}{\varphi}
\def\tr{{\rm Tr}}

\def\ot{\otimes}
\def\eps{\epsilon}

\DeclareFontFamily{U}{bbold}{}
\DeclareFontShape{U}{bbold}{m}{n}
 {
  <-5.5> s*[1.069] bbold5
  <5.5-6.5> s*[1.069] bbold6
  <6.5-7.5> s*[1.069] bbold7
  <7.5-8.5> s*[1.069] bbold8
  <8.5-9.5> s*[1.069] bbold9
  <9.5-11> s*[1.069] bbold10
  <11-15> s*[1.069] bbold12
  <15-> s*[1.069] bbold17
 }{}

\DeclareRobustCommand{\Eins}{%
  \text{\usefont{U}{bbold}{m}{n}1}%
}
\def\id{\Eins}

\usepackage{nicefrac}

\newcommand{\cme}[4]{H_{\min}^{#1}(#2\lvert#3)_{#4}}
\newcommand{\lme}[5]{\lambda_{#1}^{#2}(#3\lvert#4)_{#5}}
\usepackage{xspace}

\def\pinch{\mathcal P}
\def\cC{\mathcal C}
\def\cD{\mathcal D}
\def\cE{\mathcal E}
\def\cF{\mathcal F}

\definecolor{mblue}{rgb}{0.368417, 0.506779, 0.709798}
\definecolor{morange}{rgb}{0.880722, 0.611041, 0.142051}
\definecolor{mgreen}{rgb}{0.560181, 0.691569, 0.194885}
\definecolor{mred}{rgb}{0.922526, 0.385626, 0.209179}
\definecolor{mpurple}{rgb}{0.528488, 0.470624, 0.701351}


\newtheorem{theorem}{Theorem}
\newtheorem{prop}{Proposition}
\newtheorem{lemma}{Lemma}

\newtheorem{remark}[theorem]{Remark}
\theoremstyle{definition}
\newtheorem{definition}[theorem]{Definition}

\usepackage{titling}
\usepackage{authblk}

\setlength{\affilsep}{0pt}

 \makeatletter
 \def\maketitle{\AB@maketitle}
 \makeatother

\pretitle{\begin{center}\large \bfseries}
\posttitle{\par\end{center}}
\setlength{\droptitle}{-40pt}

\newcommand{\cl}[2]{M^\star(#1,#2)} 
\usepackage{accents}
\newcommand{\ubar}[1]{\underaccent{\bar}{#1}}
\usepackage{IEEEtrantools}
\usepackage{mathtools}

\begin{document}

\author[1]{ Dina Abdelhadi\thanks{\href{mailto:dinaa@student.ethz.ch}{dinaa@student.ethz.ch}}}
\author[2]{Joseph M.\ Renes\thanks{\href{mailto:renes@phys.ethz.ch}{renes@phys.ethz.ch}}}

\affil[1]{Department of Information Technology and Electrical Engineering, ETH Z\"urich}
\affil[2]{Institute for Theoretical Physics, ETH Z\"urich\authorcr {\rmfamily\small \today}}


\date{}

\title{On the Second-Order Asymptotics of the Partially Smoothed Conditional Min-Entropy \& Application to Quantum Compression}


\maketitle

\vspace{-3\baselineskip}
\begin{abstract}
Recently, Anshu \emph{et al.}\  introduced ``partially'' smoothed information measures and used them to derive tighter bounds for several information-processing tasks, including quantum state merging and privacy amplification against quantum adversaries [arXiv:1807.05630 [quant-ph]]. 
Yet, a tight second-order asymptotic expansion of the partially smoothed conditional min-entropy in the i.i.d.\ setting remains an open question. 
Here we establish the second-order term in the expansion for pure states, and find that it differs from that of the original ``globally'' smoothed conditional min-entropy. 
Remarkably, this reveals that the second-order term is not uniform across states, since for other classes of states the second-order term for partially and globally smoothed quantities coincides. 
By relating the task of quantum compression to that of quantum state merging, our derived expansion allows us to determine the second-order asymptotic expansion of the optimal rate of quantum data compression. 
This closes a gap in the bounds determined by Datta and Leditzky [IEEE Trans.\ Inf.\ Theory 61, 582 (2015)], and shows that the straightforward compression protocol of cutting off the eigenspace of least weight is indeed asymptotically optimal at second order.

\end{abstract}

\vspace{0mm}

\section{Introduction}
Finding tight bounds on the performance and costs of quantum information processing tasks is a central research area in quantum information theory. Examples include bounds on the number of entangled pairs that must be shared between two parties to achieve quantum state merging, or the maximum secret key-length that can be extracted in the presence of an adversary with quantum knowledge about the initial string, as in privacy amplification.
As in classical information theory, these bounds usually involve quantifying uncertainty and information through the various of notions of entropy. 
A prominent example relevant in the two aforementioned protocols is the quantum version of the conditional min-entropy of the bipartite quantum density operator $\rho_{AR}$, 
\begin{align}
H_{\min}(A\lvert R)_\rho \coloneqq -\inf \{ \lambda:\rho_{AR}\leq 2^\lambda \id_A\otimes\rho_R\}\,.
\end{align}

Smoothing this quantity by an amount $\eps$ replaces $\rho_{AR}$ (but not $\rho_R$) in the optimization by a $\sigma_{AR}$ such that $\Delta(\rho_{AR},\sigma_{AR})\leq \eps$ for some distance measure $\Delta$. 
The smoothed conditional min-entropy was introduced by Renner~\cite{Renner2016}, and shown to provide bounds on the optimal key length of privacy amplification \cite{Tomamichel2013} and optimal entanglement cost of state merging \cite{Berta2009}. 
Recently, Anshu \emph{et al.}\ observed that further constraints on the smoothing yield improved bounds for these two protocols~\cite{Anshu2018}. 
In particular, they defined the \emph{partially-smoothed} conditional min-entropy to consider only those $\sigma_{AR}$ that are $\eps$-close to $\rho_{AR}$ and satisfy the marginal constraint $\sigma_R\leq \rho_R$. 
The resulting quantity can be expressed as the optimal value of a semidefinite program (SDP). 

Smoothing with additional constraints has also been considered in the classical domain. 
Renner and Wolf imposed the additional constraint $R_{XY}\leq P_{XY}$ when smoothing the Renyi entropy over nearby distributions $R_{XY}$~\cite{renner_smooth_2004,renner_simple_2005}. 
In \cite[Appendix I]{Yang2017}, Yang, Schaefer, and Poor implicitly consider the globally-smoothed version of the R\'enyi entropy of order 2 and showed that it is equivalent to the version of Renner and Wolf.

The asymptotics of the globally smoothed conditional min-entropy is well-understood for i.i.d.\ states. 
In particular, for $\Delta$ the purification distance $P$, Tomamichel and Hayashi show in \cite{Tomamichel2013} that 
\begin{align}
 H_{\min}^{\eps,P}(A\lvert R)_{\rho^{\otimes n}} = nH(A|R)_\rho+\sqrt {n V(A|R)_\rho}\Phi^{-1}(1-\eps^2)+o(n)\,,
\end{align}
where $H(A|R)_\rho$ is the conditional von Neumann entropy, $V(A|R)_\rho$ is the conditional variance, and $\Phi^{-1}$ is the quantile function of the $\mathcal N(0,1)$ normal distribution (for more precise definitions, see \S\ref{sec:def}). 
For the partially smoothed entropy, denoted $H_{\min}^{\eps,P}(A|\dot R)_\rho$ (note the dot over $R$), \cite{Anshu2018} established that the coefficient of the first term (the first order term) is again the von Neumann entropy, but the second order term is not known. 
The methods needed to pin down the form of the second order term have application beyond the asymptotic limit, as they also often lead to tight bounds for finite $n$, as vividly demonstrated by Polyanskiy \emph{et al.}~\cite{polyanskiyChannelCodingRate2010}.

\paragraph{\normalsize Outline \& Contributions}
The main contributions of this paper can be summarized as follows: 
\begin{itemize}
\item In Section \ref{sec:pscme} we show that the SDP defining the partially smoothed conditional min-entropy for pure states can be reduced to a quadratic program, and indeed to a convex optimization in a single real variable.

\item In Section \ref{subsec:asym}, we prove the central result of this paper, which is the following second-order asymptotic expansion of the partially smoothed conditional min-entropy for pure $\rho_{AR}$,	
\begin{align}
	-H_{\min}^{\epsilon,P}(A\lvert\dot R)_{\rho_{AR}^{\otimes n}} = nH(A)_\rho+\sqrt{nV(A)_\rho}\Phi^{-1}(\sqrt{1-\epsilon^2}) +{O}(\log n).
\end{align}
We obtain this expansion by bounding the quadratic program from above and below by the hypothesis testing quantity $\beta_{\sqrt{1-\epsilon^2}}(\rho_A, \mathbbm 1)$ and using the second-order expansion for the latter from \cite[Eq. 34]{Tomamichel2013} or \cite{li_second-order_2014}. 
Thus, in this case the second order term of the partially smoothed entropy differs from its globally smoothed counterpart.
We further show that there exist other classes of states for which the asymptotic behavior of partial and global smoothing coincides, implying the second order term is not uniform across all states.


\item In Section \ref{sec:qc}, we use the derived pure state asymptotic expansion to find bounds on quantum data compression that are tight at second-order. 
The bounds reported in \cite[Theorem 5.8]{datta2015second} do not match at second order, and we find that the loose bound is the converse. 
The improved converse bound is obtained by observing that any compression protocol can be used to construct a state merging protocol, and therefore compression inherits the state merging converse in terms of partially-smoothed conditional min-entropy from  \cite{Anshu2018}. 
This implies that it is not necessary to make use of quantum coherence in any way to obtain an asymptotically optimal second-order compression rate, and the straightforward protocol of cutting off the eigensubspace of lowest weight is optimal.  

\end{itemize}

\section{Mathematical setup}\label{sec:def}
In this section we define several quantities and set the notation that will be used throughout this paper. 
We denote the positive semidefinite operators on the vector space $\mathbb C^d$ by $\mathbb P_d$. 
The inequality $\rho\geq \sigma$ between two linear operators $\rho$ and $\sigma$ denotes $\rho-\sigma\in \mathbb P_d$. 
By $\{\rho-\sigma\geq 0\}$, we denote the projector onto the positive part of $\rho-\sigma$. 
The identity operator is denoted by $\id$ and we denote the elements of a  ``standard'' basis for $\mathbb C^d$ by $\ket{x}$, where $x$ ranges from $0$ to $d-1$. Arithmetic inside the ket is taken modulo $d$. 
The particular dimension $d$ plays no role here, except that it is finite. 
Quantum states are those $\rho\in \mathbb P_d$ satisfying $\tr[\rho]=1$, referred to as ``normalized''. Subnormalized states $\sigma\in \mathbb P_d$ satisfying $\tr[\sigma]\leq 1$ will also be of use. 
We follow the convention of labelling subsystems, i.e.\ tensor factors of states defined on tensor product spaces $\mathbb C^{d_A\times d_B}$, and denote the linear operator $\rho$ on subsystems $A$ and $B$ by $\rho_{AB}$. In this convention, which mimics the labelling of probability distributions by random variables, the partial trace of $\rho_{AB}$, say over $A$, is simply denoted by $\rho_{B}$.   
We make extensive use of semidefinite programming methods; for an overview see \cite{watrous2018theory}.
Two distance metrics will be of particular importance here, the purification distance and trace distance. 
To define them, first set $\tr[|A|]=\tr[\sqrt{A^* A}]$ for any linear operator $A$ on $\mathbb C^d$ and $A^*$ its adjoint. 
\begin{definition}[Purified Distance \cite{Tomamichel2010}] 
For $\rho,\sigma\in \mathbb P_d$, the purified distance $P$ is defined in terms of the generalized fidelity $F$:
	\[P(\rho,\sigma)\coloneqq \sqrt{1-F^2(\rho,\sigma)}\text{, where }F(\rho,\sigma)\coloneqq \tr\left[\lvert \sqrt{\rho} \sqrt{\sigma}\rvert\right]+\sqrt{(1-\tr[ \rho])(1-\tr[\sigma])} \,.\]
\end{definition}
\begin{definition}[Trace Distance] For $\rho,\sigma\in\mathbb P_d$, the generalized trace distance is given by 
	\[T(\rho,\sigma) \coloneqq \tfrac{1}{2} \tr\left[\lvert \rho -\sigma\lvert\right]+\tfrac{1}{2} \left\lvert\tr[\rho-\sigma]\right\lvert.\]
\end{definition}
Note the following equivalent reformulation of the definition of the trace distance.
\begin{remark}[SDP formulation of trace distance]\label{SDPform}
For normalized $\rho$ and subnormalized $\sigma$ both in $\mathbb P_d$, 
\[T(\rho,\sigma) = \max_{0\leq\Lambda\leq\mathbbm{1}}\tr[\Lambda(\rho-\sigma)]=\min_{\tau \geq\rho-\sigma, \tau\geq 0} \tr[\tau],\] where the last equality follows by duality.
\end{remark}


\begin{definition}[Smoothed Conditional Min-entropy] 
For a distance measure $\Delta$ and arbitrary normalized state $\rho_{AR}\in \mathbb P_{d_Ad_R}$,  the partially smoothed conditional min-entropy is defined by 
\[\cme{\epsilon,\Delta}{A}{\dot{R}}{\rho_{AR}} \coloneqq -\log \lme{\max}{\epsilon,\Delta}{A}{\dot{R}}{\rho_{AR}}\,,\]
 while the globally smoothed conditional min-entropy is defined by 
 \[\cme{\epsilon,\Delta}{A}{{R}}{\rho_{AR}} \coloneqq -\log \lme{\max}{\epsilon,\Delta}{A}{{R}}{\rho_{AR}}\,,\] where
 \begin{align}
 \label{pse0}
 \lambda_{\max}^{\epsilon,\Delta}(A\lvert \dot{R})_{\rho_{AR}}\coloneqq \begin{array}[t]{rl}\underset{\lambda,\sigma}{\text{infimum}} & \lambda\\
 \text{subject to} & \lambda \id_A\otimes \rho_R\geq \sigma_{AR}\\
 & \Delta(\rho_{AR},\sigma_{AR})\leq \eps\\
 & \sigma_R\leq \rho_R\\
 &\lambda\in \mathbb R_+,\sigma_{AR}\in \mathbb P_{d_Ad_R}\,,
 \end{array}
 \end{align}
 and
 \begin{align}
 \label{gse}
 \lambda_{\max}^{\epsilon,\Delta}(A\lvert {R})_{\rho_{AR}}\coloneqq \begin{array}[t]{rl}\underset{\lambda,\sigma}{\text{infimum}} & \lambda\\
 \text{subject to} & \lambda \id_A\otimes \rho_R\geq \sigma_{AR}\\
 & \Delta(\rho_{AR},\sigma_{AR})\leq \eps\\
 & \tr[\sigma_R]\leq 1\\
 &\lambda\in \mathbb R_+,\sigma_{AR}\in \mathbb P_{d_Ad_R}\,.
 \end{array}
 \end{align}

Note that the difference between the two semidefinite programs is the relaxation of the final constraint (setting aside the domain specification of the variables) $\sigma_R \leq \rho_R$ to $\tr[\sigma_{AR}]\leq 1$.
When $\Delta$ is the purification or trace distance, the above optimizations can be expressed as SDPs. 

\end{definition}
The following quantities will also be used.
\begin{definition}[Information spectrum entropies]
	For a normalized $\rho$ and any $\sigma$, both in $\mathbb P(d)$, the information spectrum relative entropies of \cite{datta2015second} are defined 
	as follows:
	\[\ubar D_s^\epsilon(\rho \lVert \sigma) \coloneqq \sup\,\{\gamma \mid \tr[(\rho-2^\gamma \sigma) \{\rho > 2^\gamma \sigma\}]\geq 1-\epsilon\}\,,\] 
	\[\bar D_s^\epsilon(\rho \lVert \sigma) \coloneqq \inf\{\gamma \mid \tr[(\rho-2^\gamma \sigma) \{\rho > 2^\gamma \sigma\}]\leq \epsilon\}\,.\]
	The information spectrum entropies are consequently defined as 
	\[\ubar{H}_s^\epsilon(\omega) \coloneqq -\bar D_s^\epsilon(\omega \lVert \mathbbm 1), \quad \bar{H}_s^\epsilon(\omega) \coloneqq -\ubar D_s^\epsilon(\omega \lVert \mathbbm 1).\]
	
\end{definition}

We note in passing that $E_\gamma$-divergence of Liu \emph{et al.}~\cite{Liu2015} is equivalent to this version of the information spectrum entropy. 
It is defined as $E_\gamma(\rho,\sigma) \coloneqq \tr[(\rho-\gamma \sigma)\{\rho > \gamma\sigma\}]$, and therefore 
\[\bar{D}_s^\epsilon(\rho \lVert \sigma) = \inf\{\gamma \lvert E_{2^\gamma}(\rho,\sigma) \leq \epsilon \} \quad\text{and}\quad \ubar{D}_s^\epsilon(\rho \lVert \sigma) = \sup\{\gamma \lvert E_{2^\gamma}(\rho,\sigma) \geq 1-\epsilon \}\,.\]

\begin{definition}[Hypothesis testing quantity]\label{def:beta}
For normalized $\rho$ and arbitrary $\sigma$, both in $\mathbb P_d$, 
	\[\beta_\alpha(\rho,\sigma) \coloneqq \min_{0\leq \Lambda \leq \mathbbm 1}\{\tr [\Lambda\sigma] \mid \tr[\Lambda\rho]\geq \alpha\}.\] 
\end{definition}
\begin{definition}[Relative entropy and Variance]
For normalized $\rho$ and arbitrary $\sigma$, both in $\mathbb P_d$, 
the relative entropy is $D(\rho,\sigma)\coloneqq \tr[\rho(\log \rho-\log \sigma)]$, and the variance is $V(\rho,\sigma)\coloneqq \tr[\rho(\log \rho-\log\sigma)^2]-D(\rho,\sigma)^2$. 
\end{definition}
\begin{definition}
	[Quantile of the standard normal distribution $\Phi^{-1}(x)$]\[\Phi^{-1}(x)\coloneqq \sup\left\{z \in \mathbb{R}\left \lvert \frac{1}{\sqrt{2\pi}}\int_{-\infty}^{z}e^{-t^2/2}dt\leq x\right.\right\}.\]
\end{definition}

\begin{definition}[Conditional Entropy and Variance]
For a normalized $\rho_{AR}\in \mathbb P_{d_Ad_R}$, the conditional entropy is $H(A|R)_\rho \coloneqq -D(\rho_{AR},\id_A\otimes \rho_R)$, while the conditional variance is (abusing notation slightly) $V(A|R)_\rho\coloneqq V(\rho_{AR},\id_A\otimes \rho_R)$. 
\end{definition}

In \cite{Tomamichel2013,li_second-order_2014}, it is shown that for all $\alpha\in (0,1)$ and quantum states $\rho$ and $\sigma$,
\begin{align}
\label{eq:beta2nd}
-\log \beta_\alpha(\rho^{\otimes n},\sigma^{\otimes n})=n D(\rho,\sigma)-\sqrt{n V(\rho,\sigma)}\,\Phi^{-1}(\alpha)+O(\log n)\,.
\end{align}
From this expansion, \cite[Proposition 4.9]{datta2015second} derives the following for  $\eps\in (0,1)$ and all states $\rho$ and $\sigma$:
\begin{align}
\ubar D_s^\epsilon(\rho^{\ot n} \lVert \sigma^{\ot n}) & =n D(\rho,\sigma)+\sqrt{n V(\rho,\sigma)}\,\Phi^{-1}(\eps)+O(\log n)\,,\label{eq:spec2nd}\\
\bar D_s^\epsilon(\rho^{\ot n} \lVert \sigma^{\ot n}) & =n D(\rho,\sigma)-\sqrt{n V(\rho,\sigma)}\,\Phi^{-1}(\eps)+O(\log n)\,.
\end{align}




\section{Reduction for pure states}
\label{sec:pscme}

In this section we show a reduction of \eqref{pse0} to a quadratic program, for the case of pure $\rho_{AR}$ and purification distance smoothing. 
The reduction will be demonstrated in two steps. First, we show a reduction of the bipartite SDP (\ref{pse0}) to an SDP involving only a single system.
Then it is easier to show that the single system SDP is equivalent to a quadratic program.

For the first step, first note that when $\rho_{AR}$ is pure, the purification distance constraint $P(\rho_{AR},\sigma_{AR})\leq \eps$ is equivalent to $\tr[\rho_{AR}\sigma_{AR}]\geq 1-\eps^2$. 
Then the optimization in \eqref{pse0} is the SDP
\begin{align}
\label{pse}
f_{\max}(\rho_{AR},\eps)\coloneqq\inf\{\lambda:\lambda\id_A\otimes \rho_R\geq \sigma_{AR},\tr[\rho_{AR}\sigma_{AR}]\geq 1-\eps^2,\sigma_R\leq \rho_R,\lambda\in\mathbb R_+,\sigma_{AR}\in \mathbb P_{d_A d_R}\}\,.
\end{align}
We can just as well use the Schmidt decomposition and take the standard basis of $A$ and $R$ to be the respective Schmidt bases, so that $\ket{\rho}_{AR}=\sum_x\sqrt{p_x}\ket{xx}_{AR}$ for some set of Schmidt coefficients $p_x$. 
These form a probability distribution, call it $P_X$, and since the optimization is specified by $P_X$, we write $f_{\max}(P_X,\eps)$. We shall also write $p_x$ for $P_X(x)$.  

For any given probability distribution $P_X$, let 
$\ket{\psi_X}=\sum_x {p_x}\ket{x}$
be the unnormalized superposition in the standard basis. 
Abusing notation somewhat, denote $\ketbra{\psi_X}$ by $\psi_X$ (so that $X$ is part of the name of the operator, and not a system label). 
Furthermore, let $\pinch$ be the pinching or diagonalization map given by $\pinch:\sigma \mapsto \sum_x \ketbra{x} \bra{x}\sigma\ket{x}$, with which we can define the single-system SDP
\begin{align}
\label{spse}
f_{\text{SDP}}(P_X,\eps)\coloneqq
\inf\{ \lambda \mid\lambda \id - \theta\geq 0, 
\pinch(\theta)\leq \id,
\tr[{\psi_X}\theta]\geq  1-\epsilon^2,\lambda\in \mathbb R_+,\theta\in \mathbb P_d
\}\,.
\end{align}
\begin{lemma}
For any distribution $P_X$, 
\begin{align}
f_{\max}(P_X,\eps)= f_{\textup{SDP}}(P_X,\eps)\,.
\end{align}
\end{lemma}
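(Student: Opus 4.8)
The plan is to prove the two optimal values agree by building a value-preserving correspondence between feasible points of \eqref{pse} and \eqref{spse}. This proceeds in three moves: a symmetrization (twirling) argument that restricts the bipartite optimizer $\sigma_{AR}$ to a simple block form; a pruning argument that eliminates the block living off the correlated subspace; and a congruence change of variables that matches the reduced problem term-by-term to \eqref{spse}. Throughout I identify the Schmidt bases of $A$ and $R$ with the standard basis, so that $\ket{\rho}_{AR}=\sum_x\sqrt{p_x}\ket{xx}$ and $\rho_R=\sum_x p_x\ketbra{x}$.

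First I would exploit a phase symmetry of the data. Let $U_\phi\coloneqq\sum_x e^{i\phi_x}\ketbra{x}$ and consider the twirl $g_\phi:\sigma_{AR}\mapsto(U_\phi\otimes U_\phi^*)\,\sigma_{AR}\,(U_\phi^*\otimes U_\phi)$. Since $\ket{\rho}_{AR}$ carries canceling phases on $A$ and $R$, the state $\rho_{AR}$ is fixed by every $g_\phi$; likewise $\id_A\otimes\rho_R$ is fixed because $\rho_R$ is diagonal, and the induced action on the marginal, $\sigma_R\mapsto U_\phi^*\sigma_R U_\phi$, fixes $\rho_R$. Hence the objective and all three constraints are $g_\phi$-invariant, so averaging a feasible $(\lambda,\sigma_{AR})$ over $\phi$ yields a feasible $(\lambda,\bar\sigma_{AR})$ at the same $\lambda$, and the infimum is unchanged under restricting to twirl-invariant $\sigma_{AR}$. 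Computing $\langle ab|\bar\sigma_{AR}|cd\rangle$ shows the surviving matrix elements are exactly those with $\{a,d\}=\{b,c\}$, so every invariant $\sigma_{AR}$ splits orthogonally as $\sigma_{AR}=\Sigma\oplus D$, where $\Sigma\geq0$ acts on the correlated subspace $\mathrm{span}\{\ket{xx}\}\cong\mathbb C^d$ and $D=\sum_{x\neq x'}d_{xx'}\ketbra{x\,x'}$ is a nonnegative diagonal operator on its complement.

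Next I would show $D$ may be dropped. It contributes nothing to the overlap, since $\ket{\rho}$ lies in the correlated subspace and thus $\tr[\rho_{AR}\sigma_{AR}]=\bra{\rho}\Sigma\ket{\rho}$ irrespective of $D$; its only effect is to enlarge the marginal, $\sigma_R=\sum_x\bigl(c_{xx}+\sum_{y\neq x}d_{yx}\bigr)\ketbra{x}$, which makes $\sigma_R\leq\rho_R$ strictly harder to satisfy. Therefore replacing $(\lambda,\Sigma\oplus D)$ by $(\lambda,\Sigma)$ preserves feasibility at the same $\lambda$, so we may set $D=0$. Identifying $\ket{xx}\leftrightarrow\ket{x}$, and noting that the partial trace over $A$ restricted to the correlated block reproduces the pinching $\pinch$, the three constraints become $\lambda\,\rho_R\geq\Sigma$, $\pinch(\Sigma)\leq\rho_R$, and $\langle\sqrt p|\Sigma|\sqrt p\rangle\geq 1-\eps^2$, with $\ket{\sqrt p}=\sum_x\sqrt{p_x}\ket{x}$.

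Finally I would apply the congruence $\theta\coloneqq\rho_R^{-1/2}\,\Sigma\,\rho_R^{-1/2}$, a bijection of the positive cone on $\mathrm{supp}(\rho_R)$ that leaves $\lambda$ untouched: it sends $\lambda\rho_R\geq\Sigma$ to $\lambda\id\geq\theta$; it sends $\pinch(\Sigma)\leq\rho_R$ to $\pinch(\theta)\leq\id$, since it rescales each diagonal entry by $p_x^{-1}$; and since $\rho_R^{1/2}\ket{\sqrt p}=\ket{\psi_X}$ it sends the overlap constraint to $\tr[\psi_X\theta]\geq1-\eps^2$. These are exactly the constraints of \eqref{spse}, proving $f_{\max}(P_X,\eps)=f_{\mathrm{SDP}}(P_X,\eps)$. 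I expect the main obstacle to be the bookkeeping of the symmetrization step, above all verifying that discarding $D$ genuinely \emph{relaxes} rather than violates the marginal constraint $\sigma_R\leq\rho_R$ — this is the single place where the partial-smoothing constraint interacts nontrivially with the off-diagonal block, and it is what distinguishes this reduction from the globally smoothed case. A minor technicality to dispatch is the non-invertibility of $\rho_R$ when $P_X$ lacks full support, handled by restricting all operators to $\mathrm{supp}(\rho_R)$ and reading $\rho_R^{-1/2}$ as a pseudo-inverse.
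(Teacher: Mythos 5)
Your proof is correct, and its overall skeleton---restrict $\sigma_{AR}$ to the correlated subspace $\mathrm{span}\{\ket{xx}\}$, identify that subspace with a single system, then remove $\rho_R$ by a congruence---is the same as the paper's; your last two moves correspond almost verbatim to the paper's unitary $U$ and its substitution $\theta\to\rho_R^{1/2}\theta\rho_R^{1/2}$. Where you genuinely differ is the mechanism of the first step. The paper conjugates a feasible $\sigma_{AR}$ directly by the projector $\Pi=\sum_x\ketbra{xx}$, and the only delicate point is the marginal constraint: it shows $\tr_A[\Pi\sigma_{AR}\Pi]\leq\pinch(\sigma_R)$ and then applies $\pinch$ to $\sigma_R\leq\rho_R$. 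You instead symmetrize over the phase group $U_\phi\otimes U_\phi^*$, which forces the optimizer into the block form $\Sigma\oplus D$ with $D\geq 0$ diagonal on the complement, and then discard $D$; feasibility is preserved because $D\geq0$ contributes positively to the marginal and nothing to the overlap. The two routes yield the same reduced operator (indeed $\Pi\sigma_{AR}\Pi$ restricted to the correlated subspace is exactly your $\Sigma$), but your version makes the interaction between the off-diagonal block and the partial-smoothing constraint $\sigma_R\leq\rho_R$ more transparent---precisely the point you flag as the crux---at the cost of an extra averaging step, whereas the paper's pinching argument is shorter but buries that observation. Your explicit treatment of rank-deficient $P_X$ by restricting to $\mathrm{supp}(\rho_R)$ is a point the paper leaves implicit, and is handled correctly.
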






\begin{proof}
The first step is to establish that if the pair $(\sigma_{AR},\lambda)$ is feasible for \eqref{pse}, then  $(\Pi\sigma_{AR}\Pi,\lambda)$ is as well, where $\Pi \coloneqq \sum_x \ketbra{xx}$ in the Schmidt basis. 
To see this, consider the various constraints in turn. 
Conjugating the first by $\Pi$ produces $\lambda\id_A\otimes \rho_R\geq \Pi\sigma_{AR}\Pi$, since $\rho_R$ is diagonal in the $\ket{k}$ basis. 
The second constraint holds because $\Pi\rho_{AR}\Pi=\rho_{AR}$. 
For the third, note that $\tr_A[\Pi\sigma_{AR}\Pi]=\sum_x \ketbra{x}_R \bra{xx}\sigma_{AR}\ket{xx}$, while $\pinch(\sigma_R)=\sum_{xy}\ketbra{x}_R \bra{yx}\sigma_{AR}\ket{yx}$. 
Hence $\tr_A[\Pi\sigma_{AR}\Pi]\leq \pinch(\sigma_R)$. 
Applying $\pinch$ to the third constraint gives $\pinch(\sigma_R)\leq \rho_R$, and therefore $\tr_A[\Pi\sigma_{AR}\Pi]\leq \rho_R$ as intended. 
Finally, the last two constraints are clearly satisfied. 

Now we may restrict attention to feasible pairs $(\sigma_{AR},\lambda)$ such that $\Pi\sigma_{AR}\Pi=\sigma_{AR}$, i.e.\ supported only on the ``diagonal'' subspace $\text{span}(\{\ket{xx}\})$. 
These states are equivalent to states on a single system: For $U \coloneqq  \sum \ket{y-x}\bra{y}_A \otimes \ketbra{x}_R$, any such $\sigma_{AR}$ defines a $\theta_R$ via $\theta_{R}=\tr_A[\ketbra{0}_A U\sigma_{AR} U^*]$, while any $\theta_R$ defines a $\sigma_{AR}$ via $U^*(\ketbra{0}_A\otimes \theta_R)  U$.  
Using  $U$ we can simplify \eqref{pse}. 
Conjugating the first constraint by $U$ produces $\lambda\id_A\otimes \rho_R\geq \ketbra{0}_A\otimes \theta_R$, which implies $\lambda\rho_R\geq \theta_R$. 
The second constraint is equivalent to $\tr[\varphi \theta]\geq 1-\eps^2$, where $\ket{\varphi}=\sum_x \sqrt{p_x}\ket{x}$. 
Finally, in the third we have $\pinch(\theta_R)\leq \rho_R$. 
Therefore we have  
\begin{align}
f_{\max}(P_X,\eps)=\inf\{\lambda:\lambda\rho_R\geq\theta_R,\tr[\varphi_R\theta_R]\geq 1-\eps^2,\pinch(\theta_R)\leq \rho_R,\lambda\geq 0,\theta_{A}\geq 0\}\,.
\end{align}
This optimization is equivalent to \eqref{spse} upon substituting $\theta\to \rho_R^{1/2}\theta\rho_R^{1/2}$ and using $\pinch(\rho_R)=\rho_R$. 
\end{proof}

Now consider the following quadratic program, defined for any probability distribution $P_X$, 
\begin{equation} \label{qp}
f_{\text{QP}}(P_X,\eps) \coloneqq \inf_{g_k}\left\{\sum_x g_x^2 \left\lvert \sum_x g_{x} p_x \geq \sqrt{1-\epsilon^2}\right.,0\leq g_x\leq 1
\right\}.
\end{equation}
Before launching into the equivalence, let us first consider the optimizers of $f_{\text{QP}}(P_X,\eps)$. 
The Karush–Kuhn–Tucker (KKT) conditions (see \cite[\S5.5.3]{boyd2004convex}) are simply 
\[
	-2 g_x=-a p_x+b_x-c_x;a\left(\sqrt{1-\epsilon^2}-\sum g_x p_x\right)= 0;b_x(g_x-1)= 0;-c_xg_x= 0 ; a,b_x,c_x\geq 0  \,.
\]
One can check that the following choice satisfies these conditions for some $a^\star \geq 0$:
\begin{equation}
g_x = \tfrac12{a^\star p_x}  \mathbf{1}_{\{a^\star p_x/2<1\}}+\mathbf{1}_{\{a^\star p_x/2\geq1 \}},\quad
b_x =(a^\star p_x-2)\mathbf{1}_{\{a^\star p_x/2\geq1\}},\quad\,c_x=0\,.
\label{optsolqp}
\end{equation}
Hence $f_{\text{QP}}(P_X,\eps) =\sum_x \mathbf{1}_{\{a^\star p_x/2\geq1 \}}+\frac14{a^{\star 2}p_x^2}  \mathbf{1}_{\{a^\star p_x/2<1\}}$. 
Note further that the constraint is satisfied with equality, giving
\begin{align}
\label{eq:laterfeas}
\sqrt{1-\eps^2}=\sum_x \tfrac12{a^\star p_x^2}  \mathbf{1}_{\{a^\star p_x/2<1\}}+p_x\mathbf{1}_{\{a^\star p_x/2\geq1 \}}\,,
\end{align} 
which gives the alternate form 
\begin{align}
\label{eq:epsopt}
f_{\text{QP}}(P_X,\eps)=\tfrac12 a^\star \sqrt{1-\eps^2}-\sum_x (\tfrac12a^\star p_x-1)\mathbf{1}_{\{a^\star p_x/2\geq1 \}}\,.
\end{align}
Put differently, we have 
\begin{align}
\label{eq:simpleQP}
f_{\text{QP}}(P_X,\eps)=\min_a\left\{\tfrac12 a \sqrt{1-\eps^2}-\sum_x (\tfrac12ap_x-1)\mathbf{1}_{\{ap_x/2\geq1 \}}, a\geq 0\right\}\,.
\end{align}
This form is useful for numerically computing the optimal value, since now there is only a single variable.
Looking back at the definition, it is also clear that 
\begin{align}
\label{eq:QPhypo}
f_{\text{QP}}(P_X,\eps)\leq \beta_{\sqrt{1-\eps^2}}(P_X,\id_X)\,,
\end{align} 
since upper bounding the objective function in the QP by $\sum_x g_x$ yields the hypothesis testing optimization. This bound will be useful later. 

\begin{theorem}
	For all probability distributions $P_X$, 
	\begin{align}
	f_{\textup{SDP}}(P_X,\epsilon) = f_{\textup{QP}}(P_X,\eps) \,.
	\end{align}
	\end{theorem}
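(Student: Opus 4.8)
The plan is to prove the two inequalities $f_{\textup{SDP}}(P_X,\eps)\le f_{\textup{QP}}(P_X,\eps)$ and $f_{\textup{QP}}(P_X,\eps)\le f_{\textup{SDP}}(P_X,\eps)$ separately. As a preliminary observation, minimizing $\lambda$ subject to $\lambda\id-\theta\ge 0$ simply forces $\lambda=\lVert\theta\rVert_\infty$, the largest eigenvalue of $\theta$, so the SDP amounts to minimizing the operator norm of a positive semidefinite $\theta$ whose diagonal entries are at most $1$ (the constraint $\pinch(\theta)\le\id$) and whose overlap with the rank-one operator $\psi_X$ is at least $1-\eps^2$. The QP will turn out to be exactly this problem restricted to rank-one $\theta$ built from nonnegative real coefficients, and the content of the theorem is that this restriction costs nothing.

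For the direction $f_{\textup{SDP}}\le f_{\textup{QP}}$, I would take any feasible point $(g_x)$ of the QP and produce a feasible point of the SDP of equal cost. Setting $\ket{\chi}=\sum_x g_x\ket{x}$ and $\theta=\ketbra{\chi}$, I would check each constraint in turn: the operator norm of the rank-one $\theta$ is $\langle\chi|\chi\rangle=\sum_x g_x^2$, so $\lambda=\sum_x g_x^2$ certifies $\lambda\id-\theta\ge0$; the diagonal entries are $\langle x|\theta|x\rangle=g_x^2\le 1$ because $0\le g_x\le1$, hence $\pinch(\theta)\le\id$; and $\tr[\psi_X\theta]=|\langle\psi_X|\chi\rangle|^2=(\sum_x p_x g_x)^2\ge 1-\eps^2$ by the QP constraint. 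Thus the QP objective $\sum_x g_x^2$ is achieved in the SDP, and taking the infimum over $g$ gives the inequality. This direction is routine.

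The main work is the reverse inequality $f_{\textup{QP}}\le f_{\textup{SDP}}$, where I must extract QP-feasible coefficients from an arbitrary SDP-feasible pair $(\theta,\lambda)$. The construction I would use is $g_x\coloneqq |\langle x|\theta|\psi_X\rangle|/\sqrt{s}$, where $s\coloneqq\langle\psi_X|\theta|\psi_X\rangle\ge 1-\eps^2>0$. Writing $h_x\coloneqq\langle x|\theta|\psi_X\rangle$ and noting that $\sum_x p_x h_x=\langle\psi_X|\theta|\psi_X\rangle=s$ is real and positive, the QP constraint follows from the triangle inequality, $\sum_x p_x g_x\ge|\sum_x p_x h_x|/\sqrt s=\sqrt s\ge\sqrt{1-\eps^2}$. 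The bound $g_x\le 1$ follows from Cauchy–Schwarz for the positive semidefinite form $\theta$, namely $|h_x|^2\le\langle x|\theta|x\rangle\,s\le s$, using the diagonal constraint $\langle x|\theta|x\rangle\le1$. Taking the modulus makes $g_x$ automatically real and nonnegative even if $\theta$ carries complex off-diagonal entries, so no prior symmetrization of $\theta$ is required.

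The crux, and the step I expect to be the main obstacle, is verifying that the objective does not increase, i.e.\ $\sum_x g_x^2\le\lambda$. Here I would first recognize the completeness-relation identity $\sum_x |h_x|^2=\langle\psi_X|\theta(\sum_x\ketbra{x})\theta|\psi_X\rangle=\langle\psi_X|\theta^2|\psi_X\rangle$, which converts the sum into a single quadratic form in $\theta^2$. The key inequality is then the operator relation $\theta^2\le\lambda\theta$, valid because $0\le\theta\le\lambda\id$ implies $\lambda\theta-\theta^2=\theta^{1/2}(\lambda\id-\theta)\theta^{1/2}\ge0$. Combining these yields $\sum_x g_x^2=\langle\psi_X|\theta^2|\psi_X\rangle/s\le\lambda s/s=\lambda$, as needed. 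Taking the infimum over SDP-feasible $(\theta,\lambda)$ then gives $f_{\textup{QP}}\le f_{\textup{SDP}}$ and closes the argument.
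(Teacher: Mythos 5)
Your proof is correct, and the hard direction is a genuinely different argument from the paper's. The easy direction ($f_{\textup{SDP}}\le f_{\textup{QP}}$ via the rank-one operator $\ketbra{g}$) coincides with the paper's. For the reverse inequality, the paper passes to the dual SDP \eqref{dual}: using the explicit KKT optimizer \eqref{optsolqp} of the QP, it builds dual-feasible variables $(\mu,K,T)$ whose objective value equals $f_{\textup{QP}}$, invoking the Weyl inequalities to verify dual feasibility, and concludes by weak duality. You instead stay entirely on the primal side, rounding an \emph{arbitrary} SDP-feasible pair $(\theta,\lambda)$ to QP-feasible coefficients $g_x=|\braket{x|\theta|\psi_X}|/\sqrt{s}$, with the three checks resting on the triangle inequality, Cauchy--Schwarz for the form $\theta$, and the operator inequality $\theta^2\le\lambda\theta$. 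Your route is more elementary and self-contained: it needs no dual program, no KKT analysis, no Weyl inequalities, and no attainment of optima, and it yields the extra structural insight that rank-one $\theta$ with nonnegative entries in $[0,1]$ suffice in \eqref{spse}. What the paper's route buys in exchange is an explicit dual-optimal point, hence a certificate of strong duality and dual attainment, and it is organically tied to the KKT structure (Eqs.\ \eqref{eq:laterfeas}--\eqref{eq:simpleQP}) that the paper reuses later in the asymptotic expansion; your argument would leave that machinery to be developed separately (as the paper in fact does before the theorem). One cosmetic point: your construction tacitly uses $s\ge 1-\eps^2>0$, so the degenerate case $\eps=1$ should be dispatched separately (both values are then trivially zero).
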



\begin{proof}
The proof proceeds by establishing inequalities in both directions. 
For the upper bound, suppose that $g_x$ is optimal in the QP and define $\ket{g}=\sum_x g_x\ket{x}$, $\theta=\ketbra{g}$, and $\lambda=\braket{g|g}$. 
These are feasible in \eqref{spse}. 
Since $\theta$ is pure, the first condition is equivalent to $\lambda\geq \tr[\theta]$, which is clearly satisfied. 
The second is manifestly satisfied, while the third is just $\braket{\psi_X|g}\geq \sqrt{1-\eps^2}$, which is the first constraint in the QP. 
Hence $f_{\text{SDP}}(P_X,\epsilon) \leq  f_{\text{QP}}(P_X,\eps)$.

For the lower bound we construct a set of feasible variables for the dual of $f_{\text{SDP}}(P_X,\eps)$ from the optimal variables of $f_{\text{QP}}(P_X,\eps)$; this  requires a bit more effort. 
First, note that the dual of \eqref{spse} is given by 
\begin{equation} \label{dual}
	f_{\text{SDP-dual}}(P_X,{\epsilon}) \coloneqq \sup_{\mu,K,T}\{\mu (1-\epsilon^2)- \tr[K] \mid
	\mu  {\psi_X} -\mathsf{P}(K)- T \leq 0, \tr[ T]\leq 1,\mu\in \mathbb R_+,K,T\in\mathbb P_d
	\}\,.
\end{equation}
Since $f_{\text{SDP}}(P_X,{\epsilon})\geq f_{\text{SDP-dual}}(P_X,{\epsilon})$ by weak duality, to complete the proof it is sufficient to establish $f_{\text{SDP-dual}}(P_x,{\epsilon}) \geq  f_{\text{QP}}(P_X,\eps)$.  
Using $\ket{g}$ from before, set 
\[
\mu= \frac{a}{2\sqrt{1-\epsilon^2}}; T= \frac{\ketbra{g}}{\braket{g|g}}; K_{xx} = \mathbf{1}_{\{ap_x/2\geq 1\}}(\tfrac 12ap_x-1)\,,\] 
where $K$ is a diagonal matrix. 
Clearly the variables are all positive and $\tr[T]\leq 1$; using \eqref{eq:epsopt} implies that objective function satisfies 
$\mu(1-\eps^2)-\tr[K]=f_{\text{QP}}(P_X,\eps)$.

It only remains to show the first constraint holds. 
For this, consider the properties of $\mu\psi_X-\pinch(K)$. 
Since $\psi$ has rank one and $K$ is positive, the Weyl inequalities (see \cite[Theorem III.2.1]{bhatiaMatrixAnalysis1997}) imply the the second largest eigenvalue is negative. 
In detail, for $\lambda_j(M)$ the $j$th largest eigenvalue of $M$, by the Weyl inequalities we have $\lambda_2(\mu\psi_X-\pinch(K))\leq \lambda_2(\mu \psi_X)+\lambda_1(-\pinch(K))\leq 0$. 
On the other hand, $\ket{g}$ is an eigenvector of $\mu\psi_X-\pinch(K)$ with eigenvalue 1: Computing $\bra{x}(\mu\psi_X-\pinch(K))\ket{g}$ gives $\mu\sqrt{1-\eps^2}p_x-\mathbf{1}_{\{ap_x/2\geq 1\}}(\tfrac12 ap_x-1)=\tfrac12ap_x-\mathbf{1}_{\{ap_x/2\geq 1\}}(\tfrac12 ap_x-1)$, which is just $g_x$.  
Thus, $\mu  {\psi_X} -\pinch(K)$ has at most one positive eigenvalue. 
Moreover, by construction $T$ is equal to the positive part of $\mu \psi_X-\pinch(K)$, meaning $\mu \psi_X-\pinch(K)-T \leq 0$. 
Therefore, $f_{\text{SDP-dual}}(P_x,{\epsilon}) \geq  f_{\text{QP}}(P_X,\eps)$, which completes the proof. 
\end{proof}

For arbitrary pure states $\rho_{AR}$ with Schmidt coefficients $P_X$, we therefore have 
\begin{align}
H_{\min}^{\epsilon,P}(A\lvert\dot R)_{\rho_{AR}}=-\log f_{\text{QP}}(P_X,\eps)\,. 
\end{align}

\section{Asymptotic expansion}
\label{subsec:asym}
Now we turn to the asymptotic expansion of $H_{\min}^{\epsilon,P}(A\lvert\dot R)_{\rho_{AR}^{\otimes n}}$.
\begin{theorem}\label{asymppure}
	For any pure $\rho_{AR}$, 
	\begin{align}
	\label{eq:partial2nd}
	-H_{\min}^{\epsilon,P}(A\lvert\dot R)_{\rho_{AR}^{\otimes n}} &= nH(A)_{\rho}+\sqrt{nV(A)_{\rho}}\Phi^{-1}(\sqrt{1-\epsilon^2}) +{O}(\log n)\,.
	\end{align}
\end{theorem}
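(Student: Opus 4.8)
The plan is to sandwich the quadratic-program value $f_{\text{QP}}(P_X^{\otimes n},\eps)$, whose logarithm equals $-H_{\min}^{\epsilon,P}(A\lvert\dot R)_{\rho_{AR}^{\otimes n}}$ (since $\rho_{AR}^{\otimes n}$ is pure with Schmidt coefficients $P_X^{\otimes n}$), between two hypothesis-testing quantities $\beta_{\alpha}(P_X^{\otimes n},\id)$ and then invoke the expansion \eqref{eq:beta2nd}. Here $P_X$ collects the Schmidt coefficients of $\rho_{AR}$, equivalently the eigenvalues of $\rho_A$, so that $D(P_X,\id)=-H(A)_\rho$ and $V(P_X,\id)=V(A)_\rho$; with $\sigma=\id$ these identities turn the right-hand side of \eqref{eq:beta2nd} into precisely $\log\beta_{\alpha}(P_X^{\otimes n},\id)=nH(A)_\rho+\sqrt{nV(A)_\rho}\,\Phi^{-1}(\alpha)+O(\log n)$. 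The upper bound on $-H_{\min}$ is then immediate: by \eqref{eq:QPhypo} and monotonicity of the logarithm, $\log f_{\text{QP}}\le\log\beta_{\sqrt{1-\eps^2}}(P_X^{\otimes n},\id)$, which already matches the target at $\alpha=\sqrt{1-\eps^2}$.

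For the matching lower bound I would exploit the explicit optimizer \eqref{optsolqp}. Writing $\gamma=2/a^\star$ and $S=\{x:p_x\ge\gamma\}$ for the saturated set, the optimizer satisfies $g_x=1$ on $S$, so $f_{\text{QP}}=\sum_x g_x^2\ge|S|$. The sharp threshold test $\Lambda=\sum_{x\in S}\ketbra{x}$ is feasible for $\beta_{\alpha'}(P_X^{\otimes n},\id)$ at $\alpha'\coloneqq\sum_{x\in S}p_x$ and attains value $\tr[\Lambda]=|S|$, whence $\beta_{\alpha'}(P_X^{\otimes n},\id)\le|S|\le f_{\text{QP}}$. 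By the tight feasibility relation \eqref{eq:laterfeas} the two smoothing parameters are linked by $\alpha'=\sqrt{1-\eps^2}-\delta_n$ with $\delta_n\coloneqq\tfrac1\gamma\sum_{p_x<\gamma}p_x^2\ge0$, so $\log f_{\text{QP}}\ge\log\beta_{\sqrt{1-\eps^2}-\delta_n}(P_X^{\otimes n},\id)$. It then suffices to show that shifting the parameter by $\delta_n$ costs only $O(\log n)$.

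The crux, and the step I expect to be the main obstacle, is the quantitative estimate $\delta_n=O(\log n/\sqrt n)$. I would split the sum defining $\delta_n$ at a cutoff $\gamma/K$: the part with $p_x<\gamma/K$ contributes at most $1/K$, while the shell $\gamma/K\le p_x<\gamma$ contributes at most its total probability, i.e.\ the chance that the surprisal $-\log p$ (a sum of $n$ i.i.d.\ terms with mean $nH(A)_\rho$ and variance $nV(A)_\rho$) falls in a window of width $\log K$. A Berry--Esseen anti-concentration bound controls this probability, uniformly in the window's location, by $O(\log K/\sqrt n)+O(1/\sqrt n)$, valid when $V(A)_\rho>0$ (the degenerate case $V(A)_\rho=0$ is treated separately, the second-order term then vanishing). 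Taking $K=n$ yields $\delta_n=O(\log n/\sqrt n)$. Finally, since $\alpha'=\sqrt{1-\eps^2}-\delta_n$ lies in a compact subset of $(0,1)$ for large $n$ and \eqref{eq:beta2nd} holds uniformly there, the local Lipschitz continuity of $\Phi^{-1}$ converts this $O(\log n/\sqrt n)$ shift into an $O(\log n)$ change in $\sqrt{nV(A)_\rho}\,\Phi^{-1}(\alpha')$, so the lower bound meets the upper bound and establishes \eqref{eq:partial2nd}.
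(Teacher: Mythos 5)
Your proposal is correct, and its skeleton coincides with the paper's proof: the identical upper bound via \eqref{eq:QPhypo} and \eqref{eq:beta2nd}, and for the lower bound the identical reduction --- take the KKT optimizer \eqref{optsolqp}, use its saturated set as a threshold test so that $\beta_{\sqrt{1-\eps^2}-\delta_n}(P_X^{\otimes n},\id)\leq f_{\text{QP}}(P_X^{\otimes n},\eps)$, with your $\delta_n$ exactly the paper's $\eta^\star$ obtained from \eqref{eq:laterfeas}, and finally absorb the parameter shift into the remainder via Lipschitz continuity of $\Phi^{-1}$ (the paper's appeal to \cite[Lemma 3.7]{datta2015second}). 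The one genuine divergence is how the crux estimate on $\delta_n$ is established. The paper rewrites $\eta^\star$ as $\tfrac12 a^\star\,\mathbb E\bigl[\exp\bigl(-\sum_j Z_j\bigr)\mathbf 1_{\{\sum_j Z_j\geq \log(a^\star/2)\}}\bigr]$ and invokes the ready-made Lemma 47 of \cite{polyanskiyChannelCodingRate2010}, yielding $\eta^\star\leq G/\sqrt n$; you instead prove the estimate from scratch, splitting the sub-threshold mass at $\gamma/n$ and controlling the shell $[\gamma/n,\gamma)$ by a Berry--Esseen anti-concentration bound that is uniform in the (uncontrolled, $n$-dependent) location of $\gamma=2/a^\star$, giving $\delta_n=O(\log n/\sqrt n)$. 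Since the cited PPV lemma is itself a refined Berry--Esseen computation, your route is a self-contained but slightly lossier version of the same idea: weaker by a factor $\log n$, which is harmless here because the shift enters the final expansion as $\sqrt{nV(A)_\rho}\cdot O(\delta_n)=O(\log n)$, precisely the tolerated remainder in \eqref{eq:partial2nd}. What the citation buys the paper is the sharper $O(1/\sqrt n)$, which is what one would need to pursue a third-order coefficient as in \eqref{eq:3rdorder}; your estimate would not suffice for that refinement. Two points in your favor: you correctly insist on uniformity of the anti-concentration bound in the window location (essential, since nothing is known about $a^\star$), and you explicitly dispose of the degenerate case $V(A)_\rho=0$, which both your argument and the paper's (where the constant $G$ degenerates) tacitly require.
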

\begin{proof}
Again the proof proceeds by establishing bounds in both directions. 
The upper bound is simple: Combining  \eqref{eq:QPhypo} and \eqref{eq:beta2nd} gives
\begin{align}
H_{\min}^{\epsilon,P}(A\lvert \dot{R})_{\rho_{AR}^{\otimes n}}\geq nD(\rho_A,\id_A)-\sqrt{n V(\rho_A,\id_A)}\Phi^{-1}(\sqrt{1-\eps^2})+O(\log n)\,.
\end{align}
Since the conditional entropy is the negative of a relative entropy, this is the upper bound.

To establish the lower bound, it is sufficient to show that, for some $G$ which depends on $\rho_{AR}$ and not $n$, 
\begin{align}
\label{eq:target}
f_{\text{QP}}(P_X^{\otimes n},\eps) \geq \beta_{\sqrt{1-\epsilon^2}- \frac{G}{\sqrt{n}}}(P_X^{\otimes n},\mathbbm{1})\,.
\end{align}
For then, the second order expansion in \eqref{eq:beta2nd} and a Taylor expansion of $\Phi^{-1}$ (see, e.g.\ \cite[Lemma 3.7]{datta2015second}) gives 
\begin{align}
	 \log_2 f_{\text{QP}}(P_X^{\otimes n},\eps)
	 &\geq   nH(A)_\rho+\sqrt{nV(A)_\rho}\,\Phi^{-1}(\sqrt{1-\epsilon^2}-\frac{G}{\sqrt{n}}) +{O}(\log n) \\
	 &=nH(A)_\rho+\sqrt{nV(A)_\rho}\,\Phi^{-1}(\sqrt{1-\epsilon^2}) +{O}(\log n)\,.
\end{align}

To establish \eqref{eq:target}, suppose $a^\star$ is the optimizer in $f_{\text{QP}}(P_X,\eps)$ for an arbitrary $P_X$ and define $\eta^\star=\tfrac12 a^\star \sum_x p_x^2\mathbf 1_{\{a^\star p_x/2<1\}}$. 
Then by \eqref{eq:laterfeas}, $\Lambda$ with components $\Lambda_x=\mathbf1_{\{a^\star p_x/2\geq 1\}}$ is feasible for $\beta_{\sqrt{1-\eps^2}-\eta^\star}(P_X,\id_X)$. 
Since $f_{\text{QP}}(P_X,\eps)\geq \sum_x \mathbf1_{\{a^\star p_x/2\geq 1\}}$, we therefore have $f_{\text{QP}}(P_X,\eps)\geq \beta_{\sqrt{1-\eps^2}-\eta^\star}(P_X,\id_X)$. 
Replacing $P_X$ with $P_{X^n}=P_X^{\otimes n}$, it remains to show that $\eta^\star\leq \frac{G}{\sqrt n}$.
Let $Y^n$ be the random variable taking the value $P_{X^n}(x^n)$ when $X^n=x^n$. 
Then we can write $\eta^\star = \tfrac12 a^\star \mathbb E_{X^n}[Y^n\mathbf 1_{\{1/Y^n\geq a^\star/2\}}]$. 
For $Z=-\log_2 P_X(X)$, it follows that $\log Y^n=-\sum_{j=1}^n Z_j$, and therefore 
\begin{align}
\eta^\star = \tfrac12 a^\star \mathbb E_{X^n}[\exp(-\sum_{j=1}^n Z_j)\mathbf 1_{\{\sum_{j=1}^n Z_j \geq \log 2/a^\star\}}]\,.
\end{align}
Now we can appeal to Lemma 47 of \cite{polyanskiyChannelCodingRate2010}, which states that the expectation value is bounded above by $\frac{G}{\sqrt n}\exp(-\log c)$, where $G$ is a constant that depends on the sums of the second and third moments of $Z_j$ and $c$ is the lower bound of the indicator function inside the expectation, i.e.\ $\log \frac2{a^\star}$ in this case. Hence the sought-after bound $\eta^\star\leq \frac{G}{\sqrt{n}}$ holds, completing the proof.  
\end{proof}

\begin{remark}[Third order term]
	Using techniques similar to those of \cite[Appendix K]{polyanskiyChannelCodingRate2010}, one can derive the following bound on the third-order expansion for the qubit case.
	For $\rho_{AR}^{\otimes n} $ with $\rho_{A} =  \delta\ketbra 0+(1-\delta)\ketbra 1$\,, 
	\begin{align}
	-H_{\min}^{\epsilon,P}(A\lvert \dot{R})_{\rho_{AR}^{\otimes n}}\geq n h_2(\delta)+\sqrt{n v_2(\delta)}\Phi^{-1}\left(\sqrt{1-\epsilon^2}\right) - \tfrac{1}{2} \log_2 n+ {O}(1)\,,\label{eq:3rdorder}
	\end{align}
	where $h_2$ is the binary entropy $h_2(\delta)=-\delta\log \delta-(1-\delta)\log(1-\delta)$ and $v_2$ the binary variance $v_2(\delta)=\delta(1-\delta)(\log \frac{1-\delta}\delta)^2$. 

\end{remark}


In light of the above, one may expect that the asymptotic expansion to have the same form for arbitrary states. 
However, this is not the case, as demonstrated by the following two examples. 
\begin{prop}
If $\rho_{AR}=\sum_x p_x\ketbra{x}_A \otimes \ketbra{0}_R$ or $\rho_{AR}=\sum_x p_x \ketbra{x}_A\otimes \ketbra{x}_R$ for some probability distribution $p_x$, then $H_{\min}^{\epsilon,\Delta}(A\lvert \dot{R})_{\rho_{AR}}=H_{\min}^{\epsilon,\Delta}(A\lvert {R})_{\rho_{AR}}$ for any distance measure $\Delta$.
\end{prop}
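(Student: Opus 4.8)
The plan is to prove the one nontrivial inequality. Since the partially smoothed program \eqref{pse0} is just the globally smoothed program \eqref{gse} with the stronger constraint $\sigma_R\le\rho_R$ in place of $\tr[\sigma_R]\le1$, its feasible set is smaller, so $\lme{\max}{\epsilon,\Delta}{A}{\dot R}{\rho}\ge\lme{\max}{\epsilon,\Delta}{A}{R}{\rho}$ holds automatically, i.e.\ $\cme{\epsilon,\Delta}{A}{\dot R}{\rho}\le\cme{\epsilon,\Delta}{A}{R}{\rho}$. It remains to show the reverse, $\lme{\max}{\epsilon,\Delta}{A}{\dot R}{\rho}\le\lme{\max}{\epsilon,\Delta}{A}{R}{\rho}$, which I would do by taking a minimizer $(\sigma_{AR},\lambda)$ of the global program and projecting $\sigma_{AR}$ onto the support of $\rho_{AR}$ to obtain a point feasible for the partial program with the same value $\lambda$. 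In both cases $\rho_{AR}$ is supported on the range of a projector $\Pi$ ($\Pi=\id_A\otimes\ketbra{0}_R$ in the first case, $\Pi=\sum_x\ket{xx}\bra{xx}$ in the second), and throughout I read ``distance measure'' as one obeying the data-processing inequality, as both $P$ and $T$ do.

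For $\rho_{AR}=\rho_A\otimes\ketbra{0}_R$ the argument is immediate and needs no projection. Here $\rho_R=\ketbra{0}_R$, so the first constraint $\lambda\,\id_A\otimes\ketbra{0}_R\ge\sigma_{AR}\ge0$ already forces $\sigma_{AR}$ to be supported on $A\otimes\ket{0}_R$; hence $\sigma_{AR}=\sigma_A'\otimes\ketbra{0}_R$ for some $\sigma_A'\ge0$, and $\sigma_R=\tr[\sigma_A']\,\ketbra{0}_R$. Consequently the global marginal constraint $\tr[\sigma_R]\le1$ and the partial marginal constraint $\sigma_R\le\ketbra{0}_R=\rho_R$ both reduce to $\tr[\sigma_A']\le1$. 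The two programs therefore have identical feasible sets and the claimed equality follows.

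For $\rho_{AR}=\sum_x p_x\ketbra{x}_A\otimes\ketbra{x}_R$ the support constraint does not collapse, and I would use the projection $\sigma_{AR}'\coloneqq\Pi\sigma_{AR}\Pi$ with $\Pi=\sum_x\ket{xx}\bra{xx}$. Conjugating the first constraint by $\Pi$ and using $\Pi(\id_A\otimes\rho_R)\Pi=\rho_{AR}$ gives $\lambda\rho_{AR}\ge\sigma_{AR}'$; since $\id_A\otimes\rho_R\ge\rho_{AR}$ and $\lambda\ge0$, this yields the partial first constraint $\lambda\,\id_A\otimes\rho_R\ge\sigma_{AR}'$. For the marginal, $\sigma_{AR}'$ is diagonal in the $\ket{xx}$ basis, so $\sigma_R'=\sum_x\bra{xx}\sigma_{AR}\ket{xx}\,\ketbra{x}_R$, and the diagonal of $\lambda\rho_{AR}\ge\sigma_{AR}'$ gives $\bra{xx}\sigma_{AR}\ket{xx}\le\lambda p_x$. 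Here I would invoke that the global optimum satisfies $\lambda\le1$: indeed $\rho_{AR}\le\id_A\otimes\rho_R$, so $(\rho_{AR},1)$ is globally feasible and the infimum is at most $1$. Hence $\bra{xx}\sigma_{AR}\ket{xx}\le p_x$, so $\sigma_R'\le\rho_R$, establishing the partial marginal constraint.

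The genuinely delicate step, and the place I expect the main obstacle, is verifying that projecting preserves the distance constraint, $\Delta(\rho_{AR},\Pi\sigma_{AR}\Pi)\le\eps$, because $\sigma\mapsto\Pi\sigma\Pi$ is not itself a channel. I would route it through the pinching channel $\mathcal Q(\sigma)=\Pi\sigma\Pi+(\id-\Pi)\sigma(\id-\Pi)$, which fixes $\rho_{AR}$ (as $\Pi\rho_{AR}\Pi=\rho_{AR}$), so by data processing $\Delta(\rho_{AR},\mathcal Q(\sigma_{AR}))\le\Delta(\rho_{AR},\sigma_{AR})\le\eps$. It then remains to discard the off-support block, i.e.\ to show $\Delta(\rho_{AR},\Pi\sigma_{AR}\Pi)=\Delta(\rho_{AR},\mathcal Q(\sigma_{AR}))$. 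Since $\rho_{AR}$ lives entirely in the range of $\Pi$ while $\mathcal Q(\sigma_{AR})$ is block-diagonal across $\Pi$ and $\id-\Pi$, a short direct computation confirms this for both relevant metrics: for $T$ the extra $\tfrac12\tr[(\id-\Pi)\sigma_{AR}(\id-\Pi)]$ term is exactly cancelled by the corresponding change in the trace-correction term, and for $P$ the off-block contributes nothing to $\tr[\lvert\sqrt{\rho_{AR}}\sqrt{\,\cdot\,}\rvert]$ because $\sqrt{\rho_{AR}}$ annihilates the $(\id-\Pi)$ block. With the distance constraint secured, $(\sigma_{AR}',\lambda)$ is feasible for the partial program, giving $\lme{\max}{\epsilon,\Delta}{A}{\dot R}{\rho}\le\lambda$ and hence the desired equality.
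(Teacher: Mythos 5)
Your proposal is correct and follows essentially the same route as the paper: reduce the global optimizer to the span of $\{\ket{xx}\}$, note that $(\rho_{AR},1)$ is feasible so the optimal $\lambda\leq 1$, and conclude that the first constraint then forces $\sigma_R\leq\rho_R$, i.e.\ feasibility for the partial program (your explicit verification that the projection preserves the distance constraint fills in a step the paper merely asserts, under the natural reading that $\Delta$ satisfies data processing). One cosmetic slip: $\Pi\sigma_{AR}\Pi$ is supported on $\mathrm{span}\{\ket{xx}\}$ but need not be diagonal in that basis; fortunately your marginal formula $\sigma_R'=\sum_x\bra{xx}\sigma_{AR}\ket{xx}\,\ketbra{x}_R$ and the bound $\bra{xx}\sigma_{AR}\ket{xx}\leq\lambda p_x$ both hold by direct computation, so nothing breaks.
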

\begin{proof}
In the former case $R$ plays no role, i.e.\ we can set $d_R=1$, and therefore partial and global smoothing are equivalent. 
For the latter, first note that the optimal $\sigma_{AR}$ has the form $\sigma_{AR} = \sum_x s_x \ketbra{xx}$. 
This follows because applying the map $\sum_x \ketbra{xx}(.)\ketbra{xx}$ to any feasible $\sigma_{AR}$ results in another feasible solution of the desired form and with the same $\lambda$. 
Then, observe that the constraint $\lambda \mathbbm{1}_A \otimes \rho_R\geq \sigma_{AR}$ in this case reduces to $\lambda p_x \geq s_x$ for all $x$. 
Setting $s_x=p_x$ is certainly feasible, and implies that the optimal $\lambda$ is smaller than 1. 
However, this means the first constraint implies the partial smoothing constraint, and therefore  any optimal solution for the globally smoothed quantity is feasible (and optimal) for the partially smoothed quantity.
\end{proof}

Note that these examples are no contradiction to the above pure state results. 
The intersection of the two sets are pure product states; their second order expansion is zero, as both the entropy and variance vanish.

Another general case of equivalence is that of arbitrary classical states with the trace distance metric. 
The proof is entirely similar to the collision entropy case in \cite{Yang2017}.
\begin{prop}
	For an arbitrary probability distribution $P_{XY}$, $H_{\min}^{\epsilon,T}(X\lvert \dot{Y})_{P}=H_{\min}^{\epsilon,T}(X\lvert {Y})_{P}$.
\end{prop}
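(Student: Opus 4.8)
The plan is to prove the two inequalities $\cme{\epsilon,T}{X}{\dot Y}{P}\le\cme{\epsilon,T}{X}{Y}{P}$ and its reverse separately. The first is immediate, and the second rests on one structural feature of the generalized trace distance for classical sub-normalized distributions, after which the rest is bookkeeping.

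First I would dispose of the easy direction. The partial-smoothing program \eqref{pse0} differs from the global one \eqref{gse} only in replacing the constraint $\tr[\sigma_R]\le1$ by $\sigma_R\le\rho_R$, and the latter implies the former because $\tr[\rho_R]=1$. Hence every point feasible for the partial program is feasible for the global one, so its infimum is at least as large: $\lme{\max}{\epsilon,T}{X}{\dot Y}{P}\ge\lme{\max}{\epsilon,T}{X}{Y}{P}$, that is, $\cme{\epsilon,T}{X}{\dot Y}{P}\le\cme{\epsilon,T}{X}{Y}{P}$. This step uses neither classicality nor the choice $\Delta=T$.

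For the reverse inequality I would begin from an optimal $\sigma_{XY}$ of the global program and first reduce it to a classical one. Applying the dephasing map $\pinch$ in the product basis $\{\ket{x}\ket{y}\}$ leaves $\rho_{XY}=P_{XY}$ invariant, can only decrease the trace distance (monotonicity under the CPTP pinching channel), preserves $\tr[\sigma_{XY}]$, and preserves the operator inequality $\lambda\id_X\otimes\rho_R\ge\sigma_{XY}$ since $\id_X\otimes\rho_R$ is already diagonal in this basis. We may therefore assume $\sigma_{XY}$ is a sub-normalized distribution $Q$ with $\sum_{xy}Q(x,y)\le1$, whose nontrivial constraints read $Q(x,y)\le\lambda P_Y(y)$ and $T(P,Q)\le\epsilon$. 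The key step is to cap $Q$ at $P$: set $Q'(x,y)\coloneqq\min\{Q(x,y),P_{XY}(x,y)\}$. Then $Q'\le Q$ retains the first constraint, and $Q'\le P_{XY}$ forces $Q'_Y(y)\le P_Y(y)$, so the marginal constraint $\sigma'_Y\le\rho_Y$ now holds; since $Q'\le P_{XY}$ also gives $\tr[Q']\le1$, feasibility is intact provided the trace distance does not grow. To see it does not, I would use the identity, valid for normalized $P$ and sub-normalized classical $Q$, that $T(P,Q)=\sum_{xy}\max\{0,P_{XY}(x,y)-Q(x,y)\}$: only coordinates where $Q$ dips below $P$ contribute. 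Because $Q'$ agrees with $Q$ exactly on those coordinates and equals $P_{XY}$ elsewhere, the two deficits coincide, so $T(P,Q')=T(P,Q)\le\epsilon$. Thus $Q'$ is feasible for the partial program with the same $\lambda$, giving $\lme{\max}{\epsilon,T}{X}{\dot Y}{P}\le\lme{\max}{\epsilon,T}{X}{Y}{P}$, and combining with the easy direction yields equality.

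The main obstacle is the single observation that makes the capping harmless, namely the identity $T(P,Q)=\sum_{xy}\max\{0,P_{XY}(x,y)-Q(x,y)\}$, which shows the generalized trace distance is blind to any excess of $Q$ over $P$ once $Q$ is sub-normalized. I would establish it directly from the definition $T(\rho,\sigma)=\tfrac12\tr[|\rho-\sigma|]+\tfrac12|\tr[\rho-\sigma]|$ (or equivalently from the $\Lambda$-formulation in Remark~\ref{SDPform}), taking care of the $\tfrac12|\tr[\rho-\sigma]|$ term that encodes the sub-normalization; everything else in the argument is routine.
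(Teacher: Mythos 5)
Your proof is correct, and its core is the same as the paper's: cap at $P_{XY}$ with a pointwise minimum, and exploit the fact that the generalized trace distance from a normalized distribution to a subnormalized one only sees the deficit $\sum_{xy}\max\{0,P_{XY}(x,y)-Q(x,y)\}$. The scaffolding differs, though. The paper invokes the SDP form of Remark~\ref{SDPform} (constraints $W_{XY}+S_{XY}\geq P_{XY}$, $\tr[W_{XY}]\leq\epsilon$), passes to a further relaxation in $W_{XY}$ alone, caps the error term via $W'_{XY}=\min\{P_{XY},W_{XY}\}$, and reconstructs the smoothed state as $S_{XY}=P_{XY}-W^\star_{XY}$; you instead cap the smoothed distribution itself, $Q'=\min\{Q,P_{XY}\}$, and check that the distance is unchanged via your deficit identity. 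These are two views of one construction: with $W=\max\{0,P-Q\}$ one has $P-\min\{P,W\}=\min\{P,Q\}$, so both routes produce literally the same feasible point, and your deficit identity is exactly the classical specialization of the dual form in Remark~\ref{SDPform}. Your route is slightly leaner in that it skips the intermediate relaxed SDP. The genuine value added in your write-up is the explicit pinching step: the optimizations \eqref{pse0} and \eqref{gse} range over quantum operators $\sigma_{XY}$, so the restriction to classical (diagonal) optimizers needs justification; the paper handles this only implicitly when it asserts that the positivity constraint reduces to pointwise positivity, whereas you verify that dephasing preserves every constraint. Finally, like the paper, your construction actually satisfies the stronger Renner--Wolf constraint $Q'\leq P_{XY}$, not merely the marginal condition $Q'_Y\leq P_Y$, which is worth stating explicitly.
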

\begin{proof}
	For this proof, we use the trace distance in the SDP form discussed in Remark \ref{SDPform}, so that the constraint $T(P_{XY},S_{XY}) \leq \epsilon$ is replaced by the two constraints $W_{XY}+S_{XY}\geq P_{XY}, \tr[W_{XY}]\leq \epsilon$.
	Note that the positivity constraint reduces to pointwise positivity. 
	Now, define the following SDP as a relaxation of the globally smoothed conditional min-entropy SDP (\ref{gse}) with trace distance
	\begin{align}
	\lambda_{r}^{\epsilon,T}(X\lvert {Y})_{P} \coloneqq \min_{\lambda,W_{XY}\geq 0}\{\lambda \mid \lambda \mathbbm{1}_X \otimes P_Y\geq P_{XY}-W_{XY},\tr[W_{XY}]\leq \epsilon\}\,.
	\end{align}
	Since these SDPs form a sequence of relaxations, we have $\lambda_{\max}^{\epsilon,T}(X\lvert \dot{Y})_{P}\geq \lambda_{\max}^{\epsilon,T}(X\lvert {Y})_{P}\geq \lambda_r^{\epsilon,T}(X\lvert {Y})_{P}$. 
	It remains to show that from an optimal solution $(\lambda_r^\star,W_{XY}^\star)$ of $\lambda_r^{\eps,T}$, one can construct a feasible solution for the partially smoothed SDP with the same objective value. 
	Observe that the optimal $W_{XY}^\star$ is smaller than $P_{XY}$, since for any feasible $W_{XY}$, $W'_{XY}$ with $W'_{XY}(x,y)=\min\{P_{XY}(x,y),W_{XY}(x,y)\}$ is also feasible. 
	Hence $S_{XY}=P_{XY}-W^\star_{XY}$ is feasible for the partially smoothed SDP, and indeed the more stringent version of Renner and Wolf, which requires $S_{XY}\leq P_{XY}$.  
	\end{proof}
	The fact that the positivity constraint reduces to pointwise comparison is crucial for the argument. 
	In the quantum case one cannot always find an optimal $W_{AR}^\star$ such that $\rho_{AR}-W_{AR}^\star\geq 0$. 
	A counterexample is given by the state $\ket{\psi}=\tfrac1{\sqrt{10}}(3\ket{00}+\ket{11})$ at $\eps=0.1$.
	The equivalence for classical states also relies on using the trace distance. 
	A counterexample for purification distance is given by $P_{XY}$ with $P_{XY}(0,y)=1/4$, $P_{XY}(1,0)=0$, and $P_{XY}(1,1)=1/2$ at $\eps=0.1$. 
	
	We also show equivalence of the partial and global smoothing of the max mutual information (see \cite{Anshu2018}) in the Appendix. Therefore, the second order asymptotics of classical partially-smoothed conditional min-entropy and max mutual information are determined by their globally-smoothed counterparts.

\section{Application to quantum data compression}\label{sec:qc}

In this section we determine the optimal second order rate of quantum data compression. 
First studied by Schumacher~\cite{Schumacher1995}, the task of quantum data compression is to map a fixed quantum state $\rho_A$ to a Hilbert space of smaller dimension, such that the original state can be approximately recovered. 
Traditionally, the approximation quality is measured by the entanglement fidelity: For approximation parameter $\eps$ and some purification $\rho_{AR}$ of $\rho_A$, the recovered state $\rho'_{AR}$ should satisfy $F(\rho_{AR},\rho'_{AR})^2\geq 1-\eps$ (note the square). 
As defined in \cite[Definition 5.4]{datta2015second}, an $(M,\eps)$ compression protocol consists of a channel $\cC$ from $A$ to a system of $M$ qubits (i.e.\ of dimension $2^M)$ and a decompression channel back to $A$, such that $F(\rho_{AR},\cD\circ \cC(\rho_{AR}))^2\geq 1-\eps$. 
We denote by $M^\star(\rho_A,\eps)$ the smallest possible $M$ for a given combination of $\rho_A$ and $\eps$. 
Then we can show
\begin{theorem}
\label{thm:qc2nd}
For any $\eps\in (0,1)$ and any state $\rho_A\in \mathbb P_d$, 
\begin{align}
M^\star(\rho_A^{\otimes n},\eps) = n H(A)_\rho+\sqrt{n V(A)_\rho}\Phi^{-1}(\sqrt{1-\eps})+O(\log n)\,.
\end{align}
\end{theorem}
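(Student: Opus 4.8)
The plan is to prove the claimed equality by establishing matching achievability (upper bound on $M^\star$) and converse (lower bound on $M^\star$), each equal to the stated expression up to $O(\log n)$. The achievability is furnished by the naive eigenspace cut-off protocol, while the converse --- the genuinely new ingredient --- is obtained by reducing compression to quantum state merging and importing the Anshu et al.\ converse together with Theorem \ref{asymppure}. Throughout, the bookkeeping of the approximation parameter rests on the identity that $F(\rho_{AR},\rho'_{AR})^2\ge 1-\eps$ is exactly $P(\rho_{AR},\rho'_{AR})\le\sqrt\eps$, since $P=\sqrt{1-F^2}$.

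For achievability I would analyze the protocol that projects $\rho_A^{\otimes n}$ onto the span $\Pi$ of the $2^M$ eigenvectors of largest eigenvalue, padding the failure branch with a fixed state to obtain a valid channel into $M$ qubits and re-embedding on decompression. First I would check that the resulting entanglement fidelity is at least $(\tr[\Pi\rho_A^{\otimes n}])^2$, so that the requirement $F^2\ge 1-\eps$ is met as soon as the retained weight satisfies $\tr[\Pi\rho_A^{\otimes n}]\ge\sqrt{1-\eps}$. It then remains to bound the least $M$ for which a rank-$2^M$ spectral projector captures weight $\sqrt{1-\eps}$. Since the eigenvalues of $\rho_A^{\otimes n}$ are the products $P_{X^n}(x^n)$ and $-\log_2 P_{X^n}$ is a sum of $n$ i.i.d.\ terms, this threshold count is governed by the information-spectrum entropy, whose second-order behaviour follows from \eqref{eq:spec2nd}; with the weight $\sqrt{1-\eps}$ this yields $M^\star \le nH(A)_\rho + \sqrt{nV(A)_\rho}\,\Phi^{-1}(\sqrt{1-\eps}) + O(\log n)$, reproducing the Datta--Leditzky direct bound.

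For the converse I would argue that any $(M,\eps)$ compression protocol is in particular a state merging protocol for the purified state $\rho_{AR}^{\otimes n}$ in which Bob starts with no share ($B$ trivial): Alice's encoder sends $M$ qubits to Bob, whose decoder reconstructs a system that, jointly with the reference, is $\sqrt\eps$-close in purification distance to $\rho_{AR}^{\otimes n}$. The Anshu et al.\ converse for state merging then lower-bounds the required quantum communication by the partially smoothed conditional min-entropy, giving $M^\star(\rho_A^{\otimes n},\eps)\ge -H_{\min}^{\sqrt\eps,P}(A\lvert\dot R)_{\rho_{AR}^{\otimes n}} - O(\log n)$. Finally I invoke Theorem \ref{asymppure} at smoothing parameter $\sqrt\eps$, for which $\Phi^{-1}(\sqrt{1-(\sqrt\eps)^2}) = \Phi^{-1}(\sqrt{1-\eps})$, yielding the matching lower bound and closing the gap.

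I expect the main obstacle to be the reduction step itself: one must verify carefully that a compression protocol really instantiates a merging protocol in the precise resource-accounting sense used in the Anshu et al.\ converse --- that $M$ qubits of noiseless communication are correctly identified with the merging cost measured by that converse, and that the fidelity and smoothing parameters propagate without loss (in particular that the square in $F^2\ge 1-\eps$ maps cleanly onto the purification-distance radius $\sqrt\eps$ rather than $\eps$ or $\eps^2$). Getting this bookkeeping exactly right, rather than up to constants, is what guarantees the second-order term is $\Phi^{-1}(\sqrt{1-\eps})$ and not some nearby value; the remaining asymptotic estimates are routine given \eqref{eq:beta2nd}, \eqref{eq:spec2nd}, and Theorem \ref{asymppure}.
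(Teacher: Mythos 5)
Your proposal follows essentially the same route as the paper: achievability via the eigenspace cut-off protocol with retained weight $\sqrt{1-\eps}$, analyzed through the information-spectrum expansion \eqref{eq:spec2nd}, and the converse via the compression-to-merging reduction combined with the Anshu \emph{et al.} merging converse and Theorem~\ref{asymppure} at smoothing parameter $\sqrt\eps$. The one point you flag as the main obstacle is resolved in the paper by teleportation: Alice compresses to $M$ qubits and transfers them with $M$ ebits and $2M$ classical bits (classical communication being free in the merging model), so an $(M,\eps)$ compression protocol yields an $(M,\sqrt\eps)$ merging protocol whose entanglement cost is exactly $M$, and the converse $M^\star(\rho_A,\eps)\geq -H_{\min}^{\sqrt\eps,P}(A\lvert\dot R)_\rho$ holds with no $O(\log n)$ slack.
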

The proof proceeds by finding lower and upper bounds whose asymptotic expansions match at second order. 
The upper bound (achievability) is a very slight improvement over \cite[Theorem 5.5(ii)]{datta2015second}, while for the lower (converse) bound we make use of the connection between compression and state merging, and then use the converse bound on state merging in terms of partially smoothed conditional min-entropy from \cite[Theorem 6]{Anshu2018}. 
Before delving into the details of the proof, let us first examine the relationship between compression and state merging. 

As introduced in~\cite{horodeckiPartialQuantumInformation2005,horodecki_quantum_2007}, in quantum state merging two parties Alice $(A)$ and Bob $(B)$ share a state $\rho_{AB}$ with purification $\rho_{ABR}$, and the goal is to transfer the $A$ system to Bob by using only classical communication and shared entanglement. 
In general, one is interested in minimizing both the communication and entanglement costs of the protocol, but here we need only consider the entanglement cost. 
Following \cite{Anshu2018}, an $(E,\eps)$ protocol for state merging consists of a quantum channel $\cE$ from $A$ and $A_0$ to $X$ and $A_1$ with $X$ classical, a quantum channel $\cF$ from $X$, $B$, and $B_0$ to $B$, $\bar B$, and $B_1$ such that 
\begin{align}
P(\rho_{\bar BBR}\otimes \Phi_{A_1B_1},\cF\circ \cE(\rho_{ABR}\otimes \Phi_{A_0B_0}))\leq \eps\,,
\end{align}
where $\rho_{\bar BBR}$ is just $\rho_{ABR}$ with $\bar B$ replacing $A$, $\Phi$ is the maximally entangled state, and $E=\log |A_0|-\log |A_1|$. 
The optimal entanglement cost is denoted $E^\star(\rho_{AB},\eps)$.

For states with trivial $B$, any compression protocol can be used to achieve state merging, simply by combining it with teleportation.  
In particular, Alice can first compress $\rho_A$ to $M$ qubits and then transfer these to Bob using $M$ entangled pairs and $2M$ bits of classical communication. 
Then he can implement the decompressor to recover the state. 
Owing to the slightly different approximation parameters used in the two definitions, we have
\begin{prop}
\label{prop:qsmqc}
For any state $\rho_A$, if there exists an $(M,\eps)$ compression protocol, then there exists an $(M,\sqrt{\eps})$ state merging protocol.
\end{prop}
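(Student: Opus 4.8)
The plan is to realize the intuitive ``compress-then-teleport'' strategy sketched above as a formal state-merging protocol and then track the approximation parameters carefully. I start from a given $(M,\eps)$ compression protocol $(\cC,\cD)$, so that $F(\rho_{AR},\cD\circ\cC(\rho_{AR}))^2\geq 1-\eps$, and set the merging scenario to have trivial $B$, so that $\rho_{AB}=\rho_A$ and the purification is simply $\rho_{AR}$. I would let $\Phi_{A_0B_0}$ consist of $M$ maximally entangled qubit pairs, i.e.\ $\log|A_0|=M$, with no entanglement returned at the end, $\log|A_1|=0$, so that the entanglement cost is $E=\log|A_0|-\log|A_1|=M$, matching the claimed rate.

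Next I would define the two channels. Alice's channel $\cE$ first applies the compressor $\cC$ to map $A$ into an $M$-qubit register, then performs a Bell measurement on this register together with her halves $A_0$ of the shared ebits, recording the $2M$ measurement outcomes in the classical register $X$ (and leaving $A_1$ trivial). Bob's channel $\cF$ reads $X$, applies the corresponding Pauli corrections to his halves $B_0$ to complete the teleportation, and then applies the decompressor $\cD$ to the recovered $M$-qubit register, producing the output system $\bar B$ (with $B$ and $B_1$ trivial). The key structural observation is that standard teleportation implements the identity channel \emph{exactly} on the $M$-qubit register, introducing no error and preserving all correlations with $R$. Consequently the composed channel $\cF\circ\cE$ acts on $\rho_{AR}\otimes\Phi_{A_0B_0}$ exactly as $\cD\circ\cC$ acts on the $A$ part, delivering $(\cD\circ\cC)(\rho_{AR})$ on $\bar BR$.

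With trivial $B,A_1,B_1$, the merging target $\rho_{\bar BBR}\otimes\Phi_{A_1B_1}$ collapses to $\rho_{AR}$ with $A$ relabelled to $\bar B$, so the relevant figure of merit is exactly $P(\rho_{\bar BR},(\cD\circ\cC)(\rho_{\bar BR}))$, which by relabelling equals $P(\rho_{AR},\cD\circ\cC(\rho_{AR}))$. Since both states are normalized, the purified distance reduces to $P=\sqrt{1-F^2}$, and the compression guarantee $F^2\geq 1-\eps$ gives $P\leq\sqrt{\eps}$. This produces an $(M,\sqrt{\eps})$ merging protocol, as claimed.

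I expect the only real obstacle to be bookkeeping rather than anything deep: one must (i) verify that the teleportation step genuinely contributes no error, so that the composed map equals $\cD\circ\cC$ with no residual terms, and (ii) handle the mismatch between the two approximation conventions---entanglement fidelity $F^2\geq 1-\eps$ for compression versus purified distance $P\leq\eps$ for merging---which is precisely what turns $\eps$ into $\sqrt{\eps}$. Making sure the system relabelling ($A\to\bar B$) and the trivial factors ($B,A_1,B_1$) are handled consistently is the main place care is required.
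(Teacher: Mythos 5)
Your proof is correct and follows essentially the same route as the paper, which gives only an informal sketch (compress, teleport the $M$ qubits using $M$ ebits and $2M$ classical bits, then decompress) and attributes the $\sqrt{\eps}$ to the mismatch between the fidelity-squared criterion for compression and the purified-distance criterion for merging. Your write-up simply formalizes that sketch, correctly noting that teleportation is exact and that $P=\sqrt{1-F^2}\leq\sqrt{\eps}$.
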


\begin{proof}[Proof of Theorem~\ref{thm:qc2nd}]
A very slight modification of the proof of \cite[Theorem 5.5(ii)]{datta2015second} gives 
\begin{align}
\cl{\rho_A}{\epsilon}\leq \bar{H}_s^{1-\sqrt{1-\epsilon}}(A)_\rho\,.
\label{improvedachievability}
\end{align}
Set $\gamma = \bar{H}_s^{1-\sqrt{1-\epsilon}}(A)_\rho$, and $Q = \{\rho_A \geq 2^{-\gamma}\mathbbm 1\}.$ 
Pick the compression map in to be $\mathcal{C}(\rho_A) \rightarrow Q\rho_A Q +\tr[\rho_A(\mathbbm 1 -Q)]\ketbra{\phi} $, for an arbitrary state $\ketbra{\phi}$ in the image of the projector $Q$. 
The compressed system is the image of $Q$ with dimension \begin{equation}
M = \tr[Q] \leq 2^\gamma \label{eq:M}.
\end{equation}
The decompressor simply embeds this system back into $A$. 
Then one can show that the entanglement fidelity is at least $1-\epsilon$ as follows:
	\[F(\rho_{AR}, \mathcal{D}\circ \mathcal{C}(\rho_{AR}))^2 \geq \left(\tr[Q\rho_A]\right)^2 \geq \left( \tr[Q(\rho_A-2^{-\gamma}\mathbbm 1)]\right)^2
	\geq (1-(1-\sqrt{1-\epsilon}))^2. 
	\]The last step follows from the definition of $\bar{H}_s^{\epsilon}(A)_\rho$ and the choice of $\gamma$.

Combining \eqref{improvedachievability} and \eqref{eq:spec2nd} gives the bound 
\begin{align}
M^\star(\rho_A^{\otimes n},\eps) \leq  n H(A)_\rho+\sqrt{n V(A)_\rho}\Phi^{-1}(\sqrt{1-\eps})+O(\log n)\,,
\end{align}
where we have used $\Phi^{-1}(\eps)=-\Phi^{-1}(1-\eps)$.

Meanwhile, Proposition~\ref{prop:qsmqc} implies $E^\star(\rho_A,\sqrt{\eps})\leq M^\star(\rho_A,\eps)$, and so by \cite[Theorem 6]{Anshu2018} we have 
\begin{align}
-H_{\min}^{\sqrt\epsilon,P}(A\lvert\dot R)_\rho\leq M^\star(\rho_A,\eps)\,.\label{eq:qcconverse}
\end{align}
Combining this with \eqref{eq:partial2nd} gives
\begin{align}
M^\star(\rho_A^{\otimes n},\eps) \geq  n H(A)_\rho+\sqrt{n V(A)_\rho}\Phi^{-1}(\sqrt{1-\eps})+O(\log n)\,,
\end{align}
and the proof is complete. 
\end{proof}

Given the gap in the bounds of \cite{datta2015second}, one may have thought that a more sophisticated compression protocol would be needed to match the converse therein; something involving more than just a classical protocol applied to the eigenbasis. However, the improved state merging converse shows that this is not the case, and the straightforward compression protocol is sufficient to achieve the optimal rate at second order.

Figure \ref{fig:rateplt} shows a numerical evaluation of the compression bounds obtained in this paper for the qubit case. The original compression converse, the achievability bound in terms of the information spectrum entropy and a direct achievability bound through equation (\ref{eq:M}) are also shown on the plot for comparison. 
Using the form in \eqref{eq:simpleQP}, one can evaluate the partially-smoothed conditional min-entropy by means of a simple binary search. 
\begin{figure}[]
	\centering
	\scalebox{0.7}{
		
	  \centering
  \pgfplotscreateplotcyclelist{my black white}{%
  	solid, every mark/.append style={solid, fill=gray}, mark=*\\%
  	dotted, every mark/.append style={solid, fill=gray}, mark=square*\\%
  	densely dotted, every mark/.append style={solid, fill=gray}, mark=otimes*\\%
  	loosely dotted, every mark/.append style={solid, fill=gray}, mark=triangle*\\%
  	dashed, every mark/.append style={solid, fill=gray},mark=diamond*\\%
  	loosely dashed, every mark/.append style={solid, fill=gray},mark=*\\%
  	densely dashed, every mark/.append style={solid, fill=gray},mark=square*\\%
  	dashdotted, every mark/.append style={solid, fill=gray},mark=otimes*\\%
  	dashdotdotted, every mark/.append style={solid},mark=star\\%
  	densely dashdotted,every mark/.append style={solid, fill=gray},mark=diamond*\\%
  }
  \pgfplotstableread{new_ratepltpaper1000step50.dat}\plotdata

\begin{tikzpicture}

\begin{axis}[height=\textwidth,width=\textwidth,
xmin=50,xmax=1000,ymin=0.5,ymax=0.64,xlabel=$n$,
title={$\epsilon = 0.1, \delta = 0.9$},legend style={draw=none},legend pos=north east,cycle list name=my black white]
\addplot table[x index=0,y index=4] {\plotdata};
\addplot table[x index=0,y index=5] {\plotdata};
\addplot table[x index=0,y index=1] {\plotdata};

\addplot table[x index=0,y index=3] {\plotdata};

\legend{
		$\frac{1}{n}\bar H_s^{1-\sqrt{1-\epsilon}}(\rho_A)$ { info. spectrum achiev. (proof of Thm \ref{thm:qc2nd})},
			$\frac{\log_2 M}{n}$ { direct achievability ($M = \tr[Q] $ equation (\ref{eq:M}))},
		$\frac{-1}{n}H_{\min}^{\sqrt\epsilon,P}(A\lvert \dot{R})${ new converse, equation (\ref{eq:qcconverse})},
		$\frac{1}{n}(\bar H_s^{\epsilon+\eta}(\rho_A)+\log_2 \eta)${ info. spectrum converse \cite[Theorem 5.5]{datta2015second}}
},

\end{axis}

\end{tikzpicture}}
	\caption{\label{fig:rateplt}Compression bounds for the qubit case $\rho_{A} =  \delta\ketbra 0+(1-\delta)\ketbra 1$ }  
\end{figure}

\vspace{2mm}
{\bfseries{Acknowledgments.}} 
DA acknowledges the generous support of the INSPIRE Potentials - QSIT
Master Internship Award. JMR acknowledges support by the Swiss National Science Foundation (SNSF) via the National Centre of Competence in Research “QSIT”, as well as the Air Force Office of Scientific Research (AFOSR) via grant FA9550-16-1-0245.
\printbibliography[heading=bibintoc,title=References]

\appendix
\section{Appendix}

Here we show that the partially and globally smooth max mutual informations are equivalent for classical distributions under trace distance smoothing. 
Following \cite{Anshu2018}, let us first define two optimizations 
\begin{align}
\lambda_{\max}^\epsilon(X{:}Y)_P
&\coloneqq \underset{Q,T,R\geq 0}{\text{inf}}\{\tr[R_Y]
\mid P_XR_Y \geq Q_{XY},
T_{XY}+Q_{XY}\geq P_{XY},
\tr[T_{XY}]\leq \epsilon,
Q_X=P_X\}\quad\text{and}\\
\bar\lambda_{\max}^\epsilon(X{:}Y)_P&\coloneqq \underset{Q,T,R\geq 0}{\text{inf}}\{\tr[R_Y]
\mid P_XR_Y \geq Q_{XY},
T_{XY}+Q_{XY}\geq P_{XY},
\tr[T_{XY}]\leq \epsilon,
\tr[Q_{XY}]=1\}\,.
\end{align}
The the partially smoothed quantity is defined as ${I_{\max}^\eps(\dot{X}{:}{Y})_P}=\log\lambda_{\max}^\epsilon(X{:}Y)_P$, while the globally smooted quantity is 
${I_{\max}^\eps(X{:}{Y})_P}=\log \bar\lambda_{\max}^\epsilon(X{:}Y)_P$.

The latter is a relaxation of the former, so $\lambda_{\max}^\epsilon(X{:}Y)_P\geq \bar\lambda_{\max}^\epsilon(X{:}Y)_P$. 
Just as before, our strategy is to construct a feasible set of variables for the former from the optimal variables of the latter. Call these $(Q_{XY}^\star,T_{XY}^\star,R_{Y}^\star)$.
First note that $\tr[R_Y^\star]\geq 1$, which follows from normalization of $P_{XY}$ and the first and fourth constraints. 
Repeating the argument for the min entropy, we again have $T_{XY}^\star\leq P_{XY}$.
Combining the first and second constraints to remove $Q_{XY}^\star$ gives $P_XR_Y^\star\geq P_{XY}-T_{XY}^\star$. 
This is satisfied for all $x$ such that $P_X(x)=0$, since for all such $x$, $T_{XY}^\star(x,y)=0$ for all $y$. 
Thus if we write $T_{XY}^\star=P_XT^\star_{Y|X}$ for some $T_{Y|X}^\star$, the combined constraint simplifies to $R_Y^\star\geq P_{Y|X=x}-T^\star_{Y|X=x}$ for all $x$ such that $P_X(x)>0$. 
The quantity on the left is supernormalized, the quantity on the right subnormalized, so we may pick a normalized $\tilde Q_{Y|X=x}$ which lies somewhere in between. 
Setting $\tilde Q_{XY}=P_X\tilde Q_{Y|X}$, we have $\tilde Q_X=P_X$ by construction, and therefore $(\tilde Q_{XY},T_{XY}^\star,R_Y^\star)$ is feasible in $\lambda_{\max}^\epsilon(X{:}Y)_P$.

On the other hand, the counterexample from the conditional min-entropy also shows the  nonequivalence of partially and globally smoothed max mutual information in the quantum case. 



\end{document}